\def\>{\rangle}
\def\<{\langle}
\def\E{ {\cal E} }
\def\H{ {\cal H} }
\def\G{ {\cal G} }
\def\M{ {\cal M} }
\def\N{ {\cal N} }
\def\D{ {\cal D} }
\def\S{ {\cal S} }
\def\I{ \mathbbm{1} }
\def\Tr{ \mbox{Tr} }
\def\non{ \nonumber\\}
\def\QRF{ | QRF (\alpha, \beta, \delta ) \> }
\newcommand{\J}[1]{\mathbf{J}_{#1} }
\newtheorem{theorem}{Theorem}[section]
\newtheorem{lemma}[theorem]{Lemma}
\begin{document}

\title{Optimal primitive reference frames.} %

\author{David Jennings}%
\affiliation{Institute for Mathematical Sciences, Imperial College London, London SW7 2BW, United Kingdom}%
\affiliation{QOLS, The Blackett Laboratory, Imperial College London, Prince Consort Road, SW7 2BW, United Kingdom}

\date{\today}

\begin{abstract}
We consider the smallest possible directional reference frames allowed and determine the best one can ever do in preserving quantum information in various scenarios. We find that for the preservation of a single spin state, two orthogonal spins are optimal primitive reference frames, and in a product state do approximately $22 \%$ as well as an infinite-sized classical frame. By adding a small amount of entanglement to the reference frame this can be raised to  $2(2/3)^5=26 \%$. Under the different criterion of entanglement-preservation a very similar optimal reference frame is found, however this time for spins aligned at an optimal angle of $87$ degrees. In this case $24\%$ of the negativity is preserved. The classical limit is considered numerically, and indicates under the criterion of entanglement preservation, that $90$ degrees is selected out non-monotonically, with a peak optimal angle of $96.5$ degrees for $L=3$ spins. 
\end{abstract}
\pacs{03.67.-a, 03.65.Ud, 03.65.Ta}

\maketitle

\section{Introduction}

In information processing it is generally assumed that a large, classical background reference frame is defined and readily available. For example, to measure the spin of a particle along a certain axis, a Cartesian frame is required. 

In flat space, with Euclidean geometry, it is usual and convenient to work with orthogonal axes. However, for a classical reference frame (CRF) any non-degenerate coordinate system, together with the usual vector addition law, is assumed to be operationally sharp in specifying any direction in space. Furthermore, the principle of general covariance \cite{wald} dictates that the laws of physics do not contain privileged sets of vector fields. As a result of these two assumptions it follows that any two non-degenerate CRFs will do equally well in specifying directions in space, and so \emph{there is no operational distinction between non-degenerate classical reference frames.}

However, ``information is physical'', and so any directional CRF is unavoidably described by a classical system in a well-defined state. Beyond classical scenarios, situations arise where such an idealized CRF is not available and a finite-sized reference frame is required. For such a quantum reference frame (QRF), encoded say in the quantum state of a set of spin particles, one obtains deviations from the classical situation and finite-size effects can be important \cite{BRSreview, BRST2009, Mehdi2010, PY, CHRB1, cryptopowerprivateQRF}.

A typical scenario of interest might consist of an agent Eve who prepares a singlet spin state and gives the two halves to parties Alice and Bob, who wish to detect entanglement present in the system via their local measurements. In order to do so, they must share some notion of `up and down'. This classically corresponds to a shared Cartesian frame, and is described by a rotation $R(\Omega)$ that takes some local orthonormal triad for Alice or Bob to some local orthonormal triad for Eve, whether Alice and Bob decide to use orthonormal coordinates or not.
In the absence of a shared classical frame, the directionality must be encoded in quantum systems $QRF_A$ and $QRF_B$ that accompany the two halves of the singlet state. The quantum information, here entanglement, is only partially preserved in the relational properties of the composite state. The degree to which it is preserved will depend on both the size of the QRFs and on their particular directional states.

It is a remarkable fact that Nature provides a minimum non-zero size of spin for particles, e.g.\ the intrinsic spin of an electron, and hence a minimum scale for a directional reference frame. In three dimensions equipped with a Euclidean inner product we have the notion of a cross-product \cite{frankel}, and so classically it suffices to specify two linearly independent directions in space, $\mathbf{e_1} $ and $\mathbf{e_2}$, with a third direction then taken to be $\mathbf{e_1 \times e_2}$. In the quantum regime, we then find that the simplest possible QRF allowed consists of two non-aligned spin-1/2 particles \cite{BRSreview}.

In this paper we consider the fundamental limits of directional reference frames, and consider a QRF that consists of just two spin particles. Our primary interest is in the properties of the quantum reference frame itself, separate from the system for which it is used, and so we take the QRF to be in a pure state, uncorrelated with any other system. While classically any non-degenerate coordinate system performs as well as another, this is not the case for a quantum reference frame. Intuitively one would expect that a QRF with spins aligned orthogonally is best, since it is ``farthest from the spins being parallel'', surprisingly however, we shall find that this is not generally the case. Instead we find that entanglement can assist the QRF, and that orthogonal spins are \emph{almost} special.

\subsection{Overview of main results}

In section \ref{primitive} we introduce the primitive quantum reference frames and use the symmetries of G-twirling to restrict to a canonical set of states.  

In section \ref{singlespin} we determine the optimal primitive reference frame that preserves a single spin state and find that the best one can ever do is a $26 \%$ reduction of the volume of state space, for orthogonal reference frame spins with entropy of entanglement $E=0.19$. 

Section \ref{entanglement} then deals with the operational criterion of entanglement preservation, in which two distinct cases arise: either Alice lacks a directional frame or both Alice and Bob lack a directional reference frame. We establish a useful theorem for the negativity $\N$ of a bipartite state and analyse the loss in entanglement in terms of a sub-normalized quantum operation on one half of the entangled state. Surprisingly, we find that the optimal alignment of the product reference frame spins is at an angle of $87$ degrees, which preserves $23.6 \%$ of the negativity, and by adding some entanglement to the reference frame this can be raised to $24.4 \%$. We also find that it is possible to obtain greater negativity if we use a less than maximally entangled state. For the case of both Alice and Bob lacking a reference frame we find that the most entanglement one can preserve with primitive reference frames is $\N =7\%$, with identical frames for Alice and Bob, but now with reference frame spins aligned at an angle of $83$ degrees.

In section \ref{classicallimit} we consider how increasing the spin from the primitive reference frame scale to the classical limit affects matters. We show how classical vector addition emerges sharply in the large-$L$ limit, analyse how well the spin-L reference does at preserving entanglement and give evidence that suggests that $90$ degrees is asymptotically optimal \footnote{Although of course in this limit any linearly independent frame of spins preserves entanglement perfectly.}.
 
We conclude with section \ref{conclusions} and provide some technical details on G-twirling in the appendix.

\section{The Primitive Quantum Reference Frame}\label{primitive}
We look at the simplest possible quantum reference frame for specifying spatial directions in 3 dimensions - two non-aligned spin-1/2 particles. We refer to any QRF consisting of two spin-1/2 particles simply as a \textit{primitive quantum reference frame}. Our task is to determine the optimal primitive quantum reference frame for both state preservation and entanglement preservation. 
The absence of a CRF is described mathematically by the action of G-twirling, which involves an averaging of a quantum state over the rotation group. The technicalities of G-twirling as they relate to this work are described in the appendix.

\subsection{Canonical reference frame states}
In what follows we shall assume that the QRF is in a pure state $|QRF \>$. The system spin states are labelled with respect to a defining, background $CRF$ so that $|J_z=+1/2\> \equiv |0\>$ and $|J_z=-1/2\> \equiv |1\>$ and we order the three spins so that $\H = \H_{QRF} \otimes \H_{\rm system}$. 

For any single spin unitary $U$, and any composite state $\rho$ on $\H$ we can use $\G[\rho]=\G[ U^{\otimes 3} \rho (U^\dagger)^{ \otimes 3} ]$ to deduce that
\begin{eqnarray}
\G [ U^{\otimes 2} \otimes \I \rho (U^\dagger)^{ \otimes 2} \otimes \I] &=&\G[\I^{\otimes 2} \otimes U^\dagger \rho \I^{\otimes 2} \otimes U]
\end{eqnarray} 
and so a rigid rotation of the QRF is equivalent to an application of the inverse rotation to the system of interest, and does not affect the issue of irreversible loss of quantum information. This equivalence under the two different transformations is simply the equivalence between active and passive transformations in physics \cite{wald}. In particular this means that, without loss of generality, we may restrict ourselves to a subset of canonical QRF states. 

We can thus work in an external frame in which the Schmidt decomposition of the canonical QRF state is parameterized as 
\begin{eqnarray}\label{canonical}
|QRF (\alpha, \beta,\delta ) \> &=& \cos\alpha |0\> \otimes( \cos \frac{\beta}{2}|0\> + e^{i\delta } \sin \frac{\beta}{2} |1\>) \non 
 &&\hspace{-1cm} + \sin \alpha |1\> \otimes ( \sin \frac{\beta}{2} |0\> - e^{-i\delta } \cos \frac{\beta}{2} |1\>).
\end{eqnarray}

\section{Optimal for a single spin state}\label{singlespin}

The G-twirling of 3 spins under rotations, results in a single protected subsystem qubit, and in what follows we shall consider the encoding of a single spin state into this protected subsystem. The technical details of G-twirling are contained in appendices \ref{Alack} and \ref{Aprotected}, while appendix \ref{virtobs} describes the spin observables that allow access to the virtual subsystem degrees of freedom. 

The key property of the G-twirling map on the 3 spin system is that it splits up into a fully decohering map on the 4-dimensional subspace corresponding to a total angular momentum $J=3/2$, and a \emph{partially} decohering map on the orthogonal $J=1/2$ subspace. The $J=1/2$ subspace can in turn be split into two virtual qubit systems $\M_2 \otimes \N_2$. More explicitly, we have
\begin{eqnarray}
\G[ \rho] &=& \D_1 [\Pi_1 \rho \Pi_1 ] + \D_2[ \Pi_2 \rho \Pi_2]
\end{eqnarray}
where $\Pi_1$ is the projector onto the $J=3/2$ subspace, and $\Pi_2$ is the projector onto the $J=1/2$ subspace. The operation $\D_1$ is fully decohering on the support of $\Pi_1$, while $\D_2 = \D_{\M_2} \otimes \I_{\N_2}$, which signifies that it decoheres fully on a 2-dimensional virtual subsystem $\M_2$ while leaving the 2-dimensional virtual subsystem $\N_2$ unaffected (see appendix \ref{Aprotected} for more details). 

We adopt the convention of $(\alpha, \beta, \delta)$ for the $QRF$ parameters as given in (\ref{canonical}), and use single spin state parameters $(\theta, \phi)$, in the state $|\theta, \phi \>=\cos \theta |0\> +e^{i \phi} \sin \theta |1\>$.

The encoding of the spin state into the protected subsystem defines a quantum channel from the physical system into a virtual subsystem. The quality of the QRF at protecting quantum information is then equivalent to how noisy this resultant quantum channel is, and as such we do not worry about unitary rotations in the encoding of the state. We also note that the reduction in purity under the encoding can vary greatly with the direction along which the spin is polarized. To eliminate scenarios in which the state is strongly dephased (e.g.\ ones that preserve only classical information) we should require that the encoding of the quantum state does not decohere strongly along any one direction.

In light of this, we use  the volume of the image of the state space under the mapping induced by the G-twirling as a suitable measure of how well the QRF preserves a general spin state. A large volume for the image of the Bloch sphere implies a high average purity for all spin polarizations, and conversely an encoding that preserves a high level of purity for all polarizations will result in a large image volume. Furthermore, since the trace distance coincides with the Euclidean distance between Bloch vectors \cite{NC}, we will simply use the Euclidean volume element as our measure.
Given an image volume $\tilde{V}$ we can define a characteristic distance $r = (\frac{3 \tilde{V}}{4\pi})^{1/3}\le1$ for the encoding, which gives a measure for the average distinguishability of two orthogonal states under the encoding.

We first consider a canonical QRF that is in a product state 
\begin{eqnarray}\label{prodqrf}
|QRF\>= |0\> \otimes( \cos \frac{\beta}{2}|0\> + e^{i\delta } \sin \frac{\beta}{2} |1\>)
\end{eqnarray}
with the full state $|\Psi \>=|QRF\> \otimes |\theta, \phi \>$ then subject to G-twirling. Only the reduced state on the protected virtual subsystem is unaffected, and we may obtain the Bloch vector in the protected virtual subsystem by simply computing the qubit state $\Tr_{\M_2} [ \Pi_2 |\Psi \> \< \Psi | \Pi_2 ]$. 

The bases for the $J=3/2$ and $J=1/2$ are $\{ |J, s, p\> \}$, and are given explicitly in appendix \ref{Aprotected}. We define a virtual set of two qubits for the $J=1/2$ sector as $|\bar{i}\> \otimes| \bar{j} \> := |\frac{1}{2}, i, j \>$. In this notation, the second of the two virtual qubits is the protected virtual subsystem in which the quantum state is stored.

To determine the protected qubit state we first project into the $J=1/2$ sector and then trace out the first virtual qubit. Upon doing this for the uncorrelated QRF (\ref{prodqrf}), we find that the Bloch vector of the protected state is given by $\mathbf{R} =(R_x, R_y,R_z)$ where 
\begin{eqnarray}
 R_x&=&\frac{1}{2 \sqrt{3}}(-2 \cos 2 \theta  \sin^2 \frac{\beta }{2}+\cos(\delta -\phi ) \sin \beta  \sin 2 \theta ) \non  
R_y&=& -\frac{1}{2 \sqrt{3}} \sin \beta  \sin 2 \theta  \sin (\delta -\phi ) \non
R_z &=& \frac{1}{6} (\cos \beta  (-2+\cos 2 \theta )
 \non && +\cos 2 \theta +\cos (\delta -\phi ) \sin \beta  \sin 2 \theta ). 
\end{eqnarray}

For two aligned spins we have $\beta=0$ and the Bloch vector images get set to $(0,0, -\frac{2}{3} \sin^2 \theta)$ and so is a very poor image of the Bloch sphere. Similarly for anti-aligned spins, $\beta=\pi$, we get  $(-\frac{1}{\sqrt{3}} \cos 2 \theta ,0,\frac{1}{3})$ and so is also a very poor image of the Bloch sphere; both are 1-dimensional images of the sphere. 

As already mentioned, we can view the action of the G-twirling as defining a single qubit quantum channel. The mapping induced by the G-twirling can then be written as an affine map of the state space $\mathbf{x} \rightarrow A(\alpha, \beta, \delta) \mathbf{x} +\mathbf{b}(\alpha,\beta ,\delta)$, composed of a linear transformation followed by a global translation along a fixed direction. The volume distortion factor is obtained from the determinant of the linear transformation and for a canonical product state QRF is found to equal $\rm{det} A = \frac{2}{9} \sin^2 \beta$, which has a peak for $\beta_{\rm opt} =90$ degrees. Thus, the best product QRF for the preservation of the single spin logical state is obtained when we orient the two spins orthogonally, for which the Bloch sphere is shrunk by about $22\%$. The image of the Bloch sphere becomes
\begin{eqnarray}
\mathbf{R}(\theta,\phi)&=&\left(
\begin{array}{c}
 \frac{1}{2 \sqrt{3}}(- \cos 2 \theta +\cos(\delta-\phi) \sin 2 \theta)  \\
 \frac{1}{\sqrt{3}}\sin 2\theta \sin(\delta-\phi)  \\
 \frac{1}{6} (\cos 2 \theta +\cos (
\delta-\phi ) \sin 2 \theta )
\end{array}
\right).\nonumber
\end{eqnarray}
Note that the arbitrary phase angle on the QRF acts simply as a linear translation of the phase of the encoded state.

For the most general QRF in which we allow the two spins to be entangled, it is possible to once again calculate the volume distortion of the state space. In this case
\begin{eqnarray}
\rm{det} A (\alpha, \beta) &=& \frac{2}{9} (\cos \alpha - \sin \alpha)^4 (\cos \alpha +\sin \alpha )^2 \sin^2 \beta \nonumber
\end{eqnarray}
where for simplicity we have set $\delta=0$ as this turns out to be optimal. We now have a maximum volume for $\beta_{\rm opt} = 90$ degrees and $\alpha_{\rm opt} = \arctan (2\sqrt{2}-3 )$. This corresponds to an entropy of entanglement \cite{ententanglement} of $E[|QRF\>\<QRF|]=0.19$ for the QRF.

Thus, the best possible primitive quantum reference frame for the encoding a single spin state shrinks the Bloch Ball by a factor $\frac{64}{243} \approx 26 \%$, or roughly to a radius of $0.64$, and shows that an orthogonal frame is best, but a little bit of entanglement also helps.

\section{Optimal for spin entanglement}\label{entanglement}
We now determine the best primitive QRF for the preservation of entanglement. As mentioned before we have a global freedom to rigidly rotate the QRF in space. We consider a composite state $|\Psi \> =|QRF_A \> \otimes |\varphi \> \otimes |QRF_B \>$, where $|\varphi\> $ is a maximally entangled state of two spin-1/2 particles, however rotation of the two local QRFs is equivalent under G-twirling to the application of rotation of the two spins that make up $|\varphi\>$. Since every such pure maximally entangled state is locally equivalent to the singlet state, we can take $|\varphi \> = |\psi ^-\>$ and work with a general QRF, or alternatively we can fix $|QRF_A\> $ and $|QRF_B \>$ to be canonical QRF states and leave $|\varphi \>$ unspecified, but maximally entangled. We opt for the latter.

As the measure of bipartite entanglement, we use negativity \cite{negativity}, which is defined as $\N [ \rho_{AB} ] := \frac{1}{2} ( || \rho_{AB}^{T_B} ||_1 - 1)$ where $||M||_1 :=\Tr \sqrt{M^\dagger M}$ and $T_B$ denotes partial transpose with respect to $B$. While the negativity is usually convenient to calculate, the spectrum of $\rho_{AB}^{T_B}$ does not admit an analytic expression for the most general QRF, and so for such cases numerics will be needed.

There are two distinct situations of interest. The first is where either Alice or Bob does not share a classical reference frame with Eve; the second is where neither Alice nor Bob share a CRF with Eve. Since the latter is obtained by two independent G-twirlings for Alice and Bob (see appendix \ref{Atwirling}), we shall first consider the situation where Alice lacks the appropriate CRF. Moreover, this is closest to our main aim of determining the best primitive QRF, since this situation tests the quality of a single QRF, as opposed to the joint functioning of two such reference frames.

\subsection{Twirling Alice}

We consider the state consisting of a canonical QRF for Alice and a maximally entangled two spin state. The four spins are in the state
\begin{eqnarray}
| \Psi \> &=& \QRF \otimes |\varphi\>.
\end{eqnarray}
G-twirling Alice involves $\Psi \equiv |\Psi \> \< \Psi |$ transforming as 
\begin{eqnarray}
\Psi \rightarrow \G[ \Psi ]=\D_1 [\Pi_1 \Psi \Pi_1] + \D_2 [\Pi_2 \Psi \Pi_2] 
\end{eqnarray}
 which acts only on the first three spins. 

Since measures of entanglement are unitarily invariant, we define a unitary $U$ from the physical basis for the three spins to a virtual one, consisting of orthonormal basis vectors for the two sectors $\H_1$ and $\H_2$. We then analyse the two (unnormalized) states $p_1 \rho_1=U \D_1[\Pi_1 \Psi \Pi_1 ] U^\dagger $ and $p_2 \rho_2 = U \D_2[\Pi_2 \Psi \Pi_2] U^\dagger$, where we project into the two sectors of the first 3 spins and transform to the virtual basis. In the virtual basis we use the states $|\bar{0}\>$ and $|\bar{1}\>$ of the first spin to label the two orthogonal sectors.

The action of the G-twirling on the first three spins with probability $p_1$, results in the first `sector spin' being projected into the $|\bar{0}\>$ state and spins 2 and 3 in being fully depolarized, and with probability $p_2$, the sector spin being projected into $|\bar{1}\>$ and spin 2 fully depolarized. This leaves us with
\begin{eqnarray}
\rho_1 &=& \frac{1}{4}|\bar{0}\>\<\bar{0}|\otimes   \I_2 \otimes \I_3\otimes\sigma^{(1)}_{4} \non 
\rho_2 &=& \frac{1}{2}|\bar{1}\>\<\bar{1}|\otimes \I_2 \otimes\sigma^{(2)}_{34}  
\end{eqnarray}
(subscripts on the right-hand side label the qubits) and $\rho =\G[ \Psi ]=\rho = p_1 \rho_1 + p_2 \rho_2$ .
We find that for the singlet state the projection probabilities are given by
\begin{eqnarray}
p_{1,2}&=&\frac{1}{2} \pm \frac{1}{6}(\cos \beta \non &&-2 \cos \alpha  \cos \beta  \cos \delta  \sin \alpha  +\cos \delta  \sin 2 \alpha ).
\end{eqnarray}

We wish to maximize the negativity of the full state $\G[\Psi]$ across A and B. Since negativity is unitarily invariant we attack the problem in the virtual basis, and establish the following lemma and theorem, which are useful for our analysis.

\begin{lemma} Given hermitian operators $\{A_i \}$ and $\{B_i \}$, with each pair in $\{A_i \}$ having mutually orthogonal support, we have that $|\sum_i A_i \otimes B_i | = \sum_i |A_i| \otimes |B_i|$, where $|M|:=\sqrt{M^\dagger M}$ for any operator $M$.
\end{lemma}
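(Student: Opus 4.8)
The plan is to reduce the whole statement to uniqueness of the positive square root, exploiting the fact that the orthogonal-support hypothesis forces every cross term to vanish. First I would note that since each $A_i$ and each $B_i$ is hermitian, the operator $M := \sum_i A_i \otimes B_i$ is itself hermitian, so $M^\dagger M = M^2$ and $|M| = \sqrt{M^2}$. This recasts the lemma as the claim that the manifestly positive operator $P := \sum_i |A_i| \otimes |B_i|$ satisfies $P^2 = M^2$, which then settles matters by uniqueness of the positive square root of a positive operator.

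The key observation is that mutually orthogonal supports of hermitian operators force $A_i A_j = 0$ whenever $i \neq j$. Indeed, for a hermitian operator the support equals the range, which is the orthogonal complement of the kernel; so if $\mathrm{ran}(A_i) \perp \mathrm{ran}(A_j)$ then $\mathrm{ran}(A_j) \subseteq (\mathrm{ran}(A_i))^\perp = \ker A_i$, and hence $A_i A_j = 0$. Expanding $M^2 = \sum_{i,j} A_i A_j \otimes B_i B_j$ then collapses the double sum onto its diagonal, giving $M^2 = \sum_i A_i^2 \otimes B_i^2$, with no hypothesis needed on the $B_i$ beyond hermiticity.

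Next I would check that the identical collapse occurs for $P^2$. Since $|A_i| = \sqrt{A_i^2}$ has the same kernel and range as $A_i$, the absolute values $\{|A_i|\}$ again have mutually orthogonal supports, so $|A_i||A_j| = 0$ for $i \neq j$ by the same argument. Hence $P^2 = \sum_{i,j} |A_i||A_j| \otimes |B_i||B_j| = \sum_i |A_i|^2 \otimes |B_i|^2 = \sum_i A_i^2 \otimes B_i^2 = M^2$. Because each $|A_i|$ and each $|B_i|$ is positive semidefinite, every summand $|A_i| \otimes |B_i|$ is positive semidefinite and therefore so is $P$; being a positive operator whose square is $M^2$, it must be the unique positive square root, i.e.\ $P = |M|$, which is the claim.

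The computation is routine once the orthogonality is exploited, so the one step I would be most careful about — the only genuine subtlety — is the passage from ``orthogonal supports of the $A_i$'' to ``orthogonal supports of the $|A_i|$'' and the consequent vanishing of cross terms, everything resting on hermiticity being used to identify the support with the orthocomplement of the kernel. An equivalent and perhaps more transparent route is to observe that the orthogonal supports make $M$ block-diagonal across the decomposition $\bigoplus_i (V_i \otimes \H_2)$ of the first tensor factor, with $V_i = \mathrm{ran}(A_i)$, on each block $M$ reducing to $A_i \otimes B_i$; one then applies the standard tensor identity $|A_i \otimes B_i| = |A_i| \otimes |B_i|$ block by block and reassembles.
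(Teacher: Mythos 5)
Your proof is correct, and it takes a genuinely different route from the paper's. The paper works constructively with the Jordan decomposition: it writes each $A_i = A_{i,+}-A_{i,-}$ and $B_i = B_{i,+}-B_{i,-}$ as differences of positive parts with orthogonal supports, checks that all resulting tensor-product terms are mutually orthogonal in support (within fixed $i$ by the spectral decomposition, across $i\neq k$ by the hypothesis on $\{A_i\}$), and thereby exhibits the positive and negative parts of $X=\sum_i A_i\otimes B_i$ explicitly, namely $X_+=\sum_i(A_{i,+}\otimes B_{i,+}+A_{i,-}\otimes B_{i,-})$ and $X_-=\sum_i(A_{i,+}\otimes B_{i,-}+A_{i,-}\otimes B_{i,+})$, so that $|X|=X_++X_-=\sum_i|A_i|\otimes|B_i|$. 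You instead square and invoke uniqueness of the positive square root: orthogonal supports force $A_iA_j=|A_i|\,|A_j|=0$ for $i\neq j$, so $M=\sum_i A_i\otimes B_i$ and $P=\sum_i|A_i|\otimes|B_i|$ both square to $\sum_i A_i^2\otimes B_i^2$, and positivity of $P$ finishes the argument. Both proofs hinge on the same orthogonality facts, and you correctly isolate the one genuine subtlety (support orthogonality of hermitian operators kills products, and $|A_i|$ inherits the support of $A_i$). What your route buys is economy and a clean accounting of hypotheses: hermiticity of the $B_i$ enters only through $|B_i|^2=B_i^2$, with no condition on their supports. What the paper's route buys is the explicit Jordan decomposition of the sum---it exhibits $X_\pm$ rather than merely certifying the answer---which is conceptually closer to how the lemma is consumed in the subsequent theorem on negativity, where $\Tr|X|$ splits as a sum over $i$; for that purpose, though, either proof serves equally well. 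Your closing block-diagonal remark is also sound, with the minor caveat that the subspaces $\mathrm{ran}(A_i)\otimes\mathcal{H}_2$ need not exhaust the space; on the leftover orthogonal complement both $M$ and $|M|$ vanish, so the blockwise reassembly still goes through.
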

\begin{proof}
By spectral decomposition, any hermitian operator $M$ can be written as $M= M_+ - M_-$, where $M_\pm$ are positive operators with orthogonal support, and thus $|M| = M_+ + M_-$. In particular, 
\begin{eqnarray}
A_i \otimes B_i &=& (A_{i,+}\otimes B_{i,+}+ A_{i,-}\otimes B_{i,-})\non &&-(A_{i,+}\otimes B_{i,-}+ A_{i,-}\otimes B_{i,+}).
\end{eqnarray}
However, $A_{i,\pm}\otimes B_{i,\pm}$ is orthogonal to $A_{k,\mp}\otimes B_{k,\pm}$ for $i=k$ by definition of the decomposition into positive operators, and orthogonal for $i\ne k$ by assumption on the set $\{A_i \}$. Hence, we have $X:=\sum_i A_i \otimes B_i = X_+ -X_-$ where $X_+ =\sum_{i, \pm} ( A_{i,\pm}\otimes B_{i,\pm})$ and $X_- =\sum_{i, \pm} ( A_{i,\mp}\otimes B_{i,\pm}),$ and so $|X| =|\sum_i A_i \otimes B_i| = \sum_i|A_i|\otimes |B_i|$. 
\end{proof}

\begin{theorem} For any bipartite state $\rho_{AB} = \sum_i p_i \varphi_{A,i} \otimes \sigma_{AB,i}$ with orthogonal states $\{\varphi_{A,i} \}$, the negativity of the state is given by $\N [\rho_{AB} ] = \sum_i p_i \N [\sigma_{AB,i} ]$.
\end{theorem}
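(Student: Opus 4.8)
The plan is to reduce the claim to a direct application of the preceding Lemma, exploiting that the orthogonal markers $\{\varphi_{A,i}\}$ sit on Alice's side and are therefore untouched by the partial transpose $T_B$. The structure to keep in mind is that Alice's system factorizes into a ``sector'' register carrying the orthogonal states $\varphi_{A,i}$ and a remainder, with each $\sigma_{AB,i}$ supported on that remainder together with $B$.

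First I would observe that since $T_B$ acts only on $B$, and each $\varphi_{A,i}$ is supported entirely on the sector register, which is disjoint from $B$, partial transposition distributes across the sum: $\rho_{AB}^{T_B}=\sum_i p_i\,\varphi_{A,i}\otimes\sigma_{AB,i}^{T_B}$. Each $\sigma_{AB,i}^{T_B}$ is hermitian, because the partial transpose of a hermitian operator is hermitian, and each $p_i\varphi_{A,i}$ is a positive operator. Crucially, orthogonality of the states $\{\varphi_{A,i}\}$ means the operators $\{p_i\varphi_{A,i}\}$ have mutually orthogonal support, which is exactly the hypothesis required of the first tensor factor in the Lemma.

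Next I would apply the Lemma with $A_i=p_i\varphi_{A,i}$ and $B_i=\sigma_{AB,i}^{T_B}$ to obtain $|\rho_{AB}^{T_B}|=\sum_i |p_i\varphi_{A,i}|\otimes|\sigma_{AB,i}^{T_B}|=\sum_i p_i\,\varphi_{A,i}\otimes|\sigma_{AB,i}^{T_B}|$, where the second equality uses $p_i\ge 0$ and $\varphi_{A,i}\ge 0$ so that $|p_i\varphi_{A,i}|=p_i\varphi_{A,i}$. Taking the trace and using $\Tr\varphi_{A,i}=1$ together with $\Tr|\sigma_{AB,i}^{T_B}|=||\sigma_{AB,i}^{T_B}||_1$ then gives $||\rho_{AB}^{T_B}||_1=\sum_i p_i\,||\sigma_{AB,i}^{T_B}||_1$. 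Substituting into the definition $\N[\rho_{AB}]=\frac12(||\rho_{AB}^{T_B}||_1-1)$ and using the normalization $\sum_i p_i=1$ converts the additive constant into the correct weighted form, yielding $\N[\rho_{AB}]=\sum_i p_i\,\frac12(||\sigma_{AB,i}^{T_B}||_1-1)=\sum_i p_i\,\N[\sigma_{AB,i}]$.

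The main obstacle is conceptual rather than computational: one must verify that the Lemma genuinely applies, i.e.\ that the tensor factorization cleanly separates the orthogonal markers from everything affected by $T_B$, and that ``orthogonal states'' is correctly promoted to ``mutually orthogonal support'' for the associated positive operators so that the cross terms in the Lemma vanish. Once that bookkeeping is settled, the remainder is a one-line consequence of the Lemma plus the normalizations $\Tr\varphi_{A,i}=1$ and $\sum_i p_i=1$; the only point worth flagging is that the latter holds precisely because $\rho_{AB}$ is a genuine normalized state, which is what makes the $-1$ in the negativity distribute correctly over the convex combination.
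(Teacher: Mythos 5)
Your proof is correct and follows essentially the same route as the paper's: distribute the partial transpose across the sum, apply the Lemma with $A_i = p_i\varphi_{A,i}$ and $B_i = \sigma_{AB,i}^{T_B}$, and finish with trace properties and normalization. The only difference is that you spell out the verification of the Lemma's hypotheses (hermiticity of the partial transposes, orthogonal supports of the markers) more explicitly than the paper does, which is a welcome bit of rigor but not a different argument.
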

\begin{proof} The negativity is defined as $\N (\rho_{AB}) := \frac{1}{2} ( || \rho_{AB}^{T_B} ||_1 - 1 )$, where $||X ||_1 := \Tr |X|$ and $T_B$ denotes partial transpose on $B$. Since $\rho_{AB}^{T_B} = \sum_i p_i \varphi_{A,i} \otimes \sigma^{T_B}_{AB,i}$, it follows from the previous lemma and basic trace properties that $\N(\rho_{AB})= \frac{1}{2}\sum_i p_i \Tr (\varphi_{A,i}) \Tr |\sigma^{T_B}_{AB,i}|-\frac{1}{2}= \sum_i p_i \N(\sigma_{AB,i})$.
\end{proof}

Due to the orthogonality of the different local sectors for $A$, the above theorem tells us that our task of maximizing the negativity of the full state is reduced to maximizing $ \N[ \rho_{AB}] = p_2 \N [\sigma^{(2)}_{34} ]$ over the virtual two spin output states $\sigma^{(2)}_{34}$. 

\subsection{Induced quantum operation on a single spin}
From the analysis in the previous section we established that it is sufficient to consider the induced \textit{selective} quantum operation on the one half of the maximally entangled state $|\varphi \>$, that takes $\varphi \rightarrow p_2 \sigma^{(2)}_{34}$. This operation depends on the specific QRF state $\QRF$ involved in the G-twirling and can be described by the linear completely positive map
\begin{eqnarray}
\E_A [ |\varphi \>\< \varphi | ] &=& M_1 |\varphi \>\< \varphi | M_1^\dagger +  M_2 |\varphi \>\< \varphi | M_2^\dagger \nonumber
\end{eqnarray}
where $M_{1,2}$ act non-trivially only on the first spin. The operators $M_{1,2}$ can be found by projecting into the $J=1/2$ sector and then tracing out the virtual subsystem $\M$. For the canonical primitive QRF state (\ref{canonical}) they are given by
\begin{eqnarray}\label{kraus}
M_1 &=& \left ( \begin{array}{c c} 
\frac{1}{\sqrt{2}}(e^{i \delta}\cos \alpha - \sin \alpha)  \sin \frac{\beta}{2} & 0 \\
-\frac{1}{\sqrt{6}}(e^{i \delta}\cos \alpha + \sin \alpha) \sin \frac{\beta}{2} & \sqrt{\frac{2}{3}} \cos \alpha \cos \frac{\beta}{2} \\
\end{array} \right ) \non 
M_2 &=& \left ( \begin{array}{c c} 
0 &\frac{1}{\sqrt{2}}(e^{i \delta}\cos \alpha - \sin \alpha)  \sin \frac{\beta}{2}  \\
0& \frac{1}{\sqrt{6}} (e^{i \delta}\cos \alpha + \sin \alpha) \sin \frac{\beta}{2} \\
\end{array} \right ).
\end{eqnarray}

We have fixed the QRF state to be canonical, and can now consider a maximally entangled spin state shared between $A$ and $B$ which, in general, can be written as $|\varphi\> = \I \otimes U |\psi ^-\>$, for some unitary $U$. The action of the quantum operation on this state is then $\E_A [ |\varphi \> \< \varphi|] =\sum_i (M_i\otimes \I) |\varphi\>\<\varphi | (M_i^\dagger \otimes \I) = (\I \otimes U) \E_A [ |\psi^- \>\< \psi^-|](\I \otimes U^\dagger)$ and from unitary invariance of negativity we conclude that all maximally entangled spin states $|\varphi \>$ will suffer exactly the same loss in negativity.

It is then a simple matter to compute $p_2\sigma^{(2)}_{34} = \E_A [ |\psi^-\>\< \psi^-|]$ and analyse its negativity. In particular, for the case of a canonical product state QRF, the state $\sigma^2_{34}$ is rank 2, and (for $ 0 \le \beta \le \pi$) has eigenvalues  $\lambda_{1,2} = \frac{1}{2} \pm \frac{\cos \frac{\beta}{2}}{\cos \beta - 3} $ with corresponding eigenstates 
\begin{eqnarray}
|e_1\> &=&  (-\sqrt{3} \cot \frac{\beta}{4} , -\sqrt{3}, -\tan \frac{\beta}{4}, 1) \non 
|e_2\> &=& (\sqrt{3}\tan \frac{\beta}{4} , -\sqrt{3}, \cot \frac{\beta}{2}, 1)
\end{eqnarray} in the virtual basis. 

For $\beta=\pi/2$ the negativity of the bipartite state may be found analytically and is $\N[\rho] = \frac{1}{3\sqrt{2}} \approx 23.57 \%$, and furthermore, to good approximation we have that
\begin{eqnarray}
\N(\beta,\gamma) \approx \frac{1}{3\sqrt{2}} |\sin \beta  \sin 2 \gamma |
\end{eqnarray}
for an entangled state $|\varphi \> =\cos \gamma |01 \> - \sin \gamma |10\>$. 
 Surprisingly however, it turns out that a state of orthogonal spins is \emph{not} the optimal product state QRF. It can be shown that one can do slightly better, and obtain $\N[\rho]=23.60 \%$ for $\beta_{\rm opt}=87$ degrees.

As in the previous section, the addition of a small bit of entanglement increases the performance of the QRF. Numerics on the full set of quantum reference frames show that the optimal primitive reference frame for preserving entanglement in a maximally entangled spin state achieves $\N[\rho]=24.4 \%$ for the QRF in the state with $\delta=0$, an entropy of entanglement of $E=0.167$ (for $\alpha_{\rm opt}=-0.15$) and the spins aligned at an angle $82$ degrees! 
Interestingly, this QRF is quite similar to the analytic optimal one we obtained for the quite different task of preserving a logical state. Furthermore, both do approximately $25\%$ as well as the classical limit, with the addition of entanglement providing a similar benefit in each case. 

It is also of interest to consider \textit{partially} entangled states $|\varphi\>$. If we use the entangled state $|\varphi \> = \cos \gamma |01\> -\sin \gamma |10\>$, instead of a maximally entangled state we find that the optimal alignment angle obeys
\begin{eqnarray}
\lim_{\gamma \rightarrow 0} \beta_{\rm opt} &=& \frac{\pi}{2}
\end{eqnarray}
however, more unusual is that we can preserve more output entanglement by using a \emph{less than maximally} entangled input state. A careful numerical analysis finds that the most entanglement that can be preserved with any QRF and any entangled two spin state is with $\beta_{\rm opt}=82$ degrees, $\delta_{\rm opt}=0$, $\alpha_{\rm opt}= -0.2$ and $\gamma_{\rm opt}=38$ degrees. For this we get $\N=25.2 \%$. Here we note that the negativity varies slowly for angles between about $80$ and $90$, while the addition of entanglement to the QRF contributes to a much larger variation for $\N$. The unusual parameters for the optimal QRF in each case arise from the finite-sized effects within the full Hilbert space, which occur once we impose the constraint that the QRF must be a separate, disentangled, system. For such small systems the states of the protected virtual subsystem only have partial overlap with those accessible product states.

\subsection{Twirling Alice and Twirling Bob}

The tools acquired in the previous section can be directly applied to the situation where neither Alice nor Bob has access to the reference frame of the prepared maximally entangled spin state. As before, we restrict to primitive quantum reference frames $QRF_A$ and $QRF_B$ for Alice and Bob respectively.

The composite system is composed of six spins, in the pure state $|\Psi \> =|QRF_A\> \otimes |\varphi \> \otimes |QRF_B \>$. However, the independent G-twirling at A and B transforms the state as $ \Psi \rightarrow \G_B[\G_A[\Psi]] \equiv \G [\Psi]$ where
\begin{eqnarray}
\G[ \Psi] &=& \sum_{i,j}\D_{B,i}[\D_{A,j}[\Pi_{i,j} \Psi \Pi_{i,j}]] 
\end{eqnarray}
where we again label local sectors as either $1$ or $2$, and define $\Pi_{i,j} \equiv \Pi_{B,i}\Pi_{A,j}$. 

We transform to the two local virtual bases, and use the orthogonality of the spin states labelling the sectors to deduce that $\N[\G[ \Psi]] = p_{2,2} \sigma^{(2,2)}_{3,4} $. For identical product reference frames we have $p_{2,2}= \frac{1}{18} (7-\cos \beta) \sin^2 \beta/2$ and $\sigma^{(2,2)}_{3,4} $ is rank 2. We obtain this state more directly via an application of the selective operations derived above, in other words
\begin{eqnarray}
p_{2,2} \sigma^{(2,2)}_{3,4} &=& \E_B [\E_A [|\varphi \> \< \varphi | ]]
\end{eqnarray}
with corresponding single spin Kraus matrices $\{M^A_i, M^B_j \}$ given by the local QRFs as in (\ref{kraus}). It is then straightfoward to determine, for example, that in the case of product state QRFs for both Alice and Bob, the largest negativity preserved under the two G-twirlings is only $\N =7\%$ for a QRF with spins aligned at $\beta_A =\beta_B= 83$ degrees.

\section{The Classical Limit: Spin-L quantum reference frames}\label{classicallimit}

We have obtained optimal angles close to 90 degrees for primitive QRFs, and so it is natural to ask what happens to these angles in the classical limit. To do this we consider a QRF composed of two spin-$L$ particles and analyse how $\beta_{\rm opt}$ varies with $L$ for a singlet state input. The sector structure at A now becomes $(\H_{2L}\oplus \H_{2L-1}\oplus \cdots \oplus \H_0 )\otimes \H_{1/2} = \H_{2L+1/2}\oplus 2\H_{2L-1/2}\oplus \cdots \oplus 2\H_{1/2}$, and we get $2L$ virtual protected qubits on $2L$ different sectors.

As before, we unitarily transform $A$ to a local `sector' plus `junk' system and a protected qubit system. The negativity of the resultant state (assuming singlet between A and B) is then given by
\begin{eqnarray}\label{negLeqn}
\N [ \G[\QRF \otimes|\psi^-\>]] &=& \sum_k p_k\N[ \sigma^k ]
\end{eqnarray}
where $k$ ranges over the $2L$ sectors containing protected qubits, $\sigma^k$ is the reduced state on the two spin subsystem, and $p_k$ is the projection probability of obtaining sector $k$.

We may once again obtain a reference frame dependent quantum operation $\E_A :\rho \mapsto  \sum_{i_k,k} M^k_{i_k} \rho M^{k \dagger}_{i_k}$ that describes the action of the G-twirling purely on Alice's spin and gives the required ensemble terms $p_k \sigma^k =\sum_{i_k} M^k_{i_k} |\psi^- \>\<\psi^-| M^{k \dagger}_{i_k}$ for the negativity.

\begin{figure}[t]
\centering
\includegraphics[width=7cm]{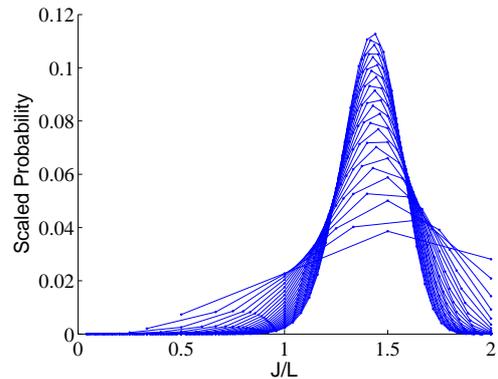}
\caption{[Color online] \textbf{Quantum Pythagoras' theorem:} getting $\sqrt{2}$ with spins up to $L=25$. The vertical axis is a rescaled probability to account for the increase in data points as we increase $L$. The horizontal axis is $J/L$ where $J$ is the measured total angular momentum of the two orthogonal spins.}
\label{Pythagoras}
\end{figure}

\subsection{Sectors for the Quantum Reference Frame}

To construct the unitary to go from the physical basis states to the $\H_k$ sector states we make use of the Clebsch-Gordan coefficients \cite{Rose, Edmonds} $\<L,L; m_1, m_2 | J,M\>$ to couple the two spin-$L$ particles in the QRF. 

In this setting it is illuminating to see how for large $L$ the state of two orthogonal spins $|J_z=L\> \otimes |J_x=L\>$ is distributed over the different sectors. Classically one expects the sharp vector addition law where the two orthogonal vectors add to one of length $\sqrt{2}L$. Numerics show that as $L$ increases the distribution of $|J_z=L\> \otimes |J_x=L\>$ is sharply peaked on the sector with total $J$ value closest to $\sqrt{2}L$ as one would expect (see figure \ref{Pythagoras}). For example, with $L=17$ it peaks on sector $J=24$ with $24/17 \approx 1.412$ being a good approximation to $\sqrt{2} \approx 1.414$. By increasing $L$ we probabilistically recover the standard vector addition, although for any finite-sized spin `$\sqrt{2}$' is always a rational number, which would have pleased the ancient Greeks. Classicality emerges through the distribution over the different sectors, and the value for the negativity of the twirled state is dominated by states $\sigma^{k_c}$ where $k_c$ labels a sector with total angular momentum near to the classical value.

For any angle of inclination $\beta$ for the QRF spins, we can construct the corresponding state 
\begin{eqnarray}
|QRF\> =|J_z=L\> \otimes|J_{\hat{n}(\beta)} =L\>
\end{eqnarray}
where $|J_{\hat{n}(\beta)} =L\>
$ is a coherent spin state polarized at an angle $\beta$ to the $Z$ axis, and then compute the negativity of $\G[ |QRF\> \otimes |\psi^-\>]$. Intuitively we expect the preserved negativity to increase with $L$ for optimal angles at each value of $L$. 

In figure \ref{optimal} we have computed the optimal angles $\beta_{\rm opt}(L)$, for a QRF composed of two spin $L$ particles, that preserve the most negativity in the singlet state. The numerical algorithm used is as follows: for a fixed $L$, we first construct Alice's unitary transformation $U_A(L)$ from the physical basis of the spin system $\H= \H_L \otimes \H_L \otimes \H_{1/2}$ to the virtual basis $\{ |s,j,p\> \}$ where $s$ is the sector label, $j$ labels the `junk' degrees of freedom affected by the G-twirling, while $p$ labels the degrees of freedom of the protected virtual subsystems. For any fixed angle $\beta$, we construct $|QRF (\beta)\>_A \otimes |\psi^-\>_{AB}$ and apply $U_A(L)\otimes \I_B$ to the full state. The negativity after G-twirling is then obtained by projecting onto sectors, tracing out onto the protected qubit systems and using (\ref{negLeqn}) to find the total preserved negativity between $A$ and $B$.

 We find that the function $\beta_{\rm opt} (L)$ rises to a peak of $96.5$ degrees for $L=3$ before starting a slow decline. For large values of $L$ we find that the function $\N(\beta, L= \rm{constant})$ flattens out over the interval $0$ to $180$ degrees and vanishes at the two end-points where the QRF becomes degenerate. The maximal preservation of negativity increases rapidly with $L$, reaching $90 \%$ already by $L=6.5$, see figure \ref{negativity}. 

Some intuition as to why small values of $L$ have $\beta_{\rm opt} > 90$ degrees can be gained from figure \ref{Pythagoras}. For small values of $L$ the probability distribution is asymmetric and has a relatively large weight on the sector with total angular momentum $2L$. This sector contains no protected subsystem, and so it is clearly beneficial to have an angle of inclination slightly above $90$ to reduce the contribution from a sector that can preserve no negativity. Of course there is another competing aspect. Having the spins strongly anti-aligned results in virtual qubit states $\sigma_k$ with poor negativity. Note that the $L=1/2$ case is in fact the only one with $\beta_{\rm opt} <90$ degrees, and so in this case it is not merely a matter of avoiding the sector of largest $L$, but there is also a strong dependence of the negativity for each $\sigma_k$ on the QRF spin alignment. In light of these results, it would be of interest in future work to study this emergence of classicality, and to further analyse the competing mechanisms at work for small systems.

\begin{figure}[t]
\centering
\includegraphics[width=7cm]{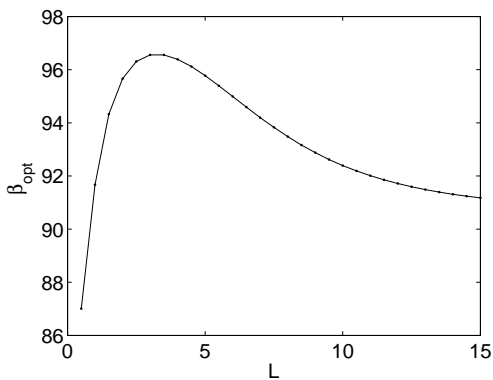}
\caption{\textbf{The classical limit:} The optimal angle of inclination $\beta_{\rm opt}$ for the two spins in the QRF as a function of $L$. A peak occurs of $\beta_{\rm opt}=96.5$ degrees at $L=3$. Numerical analysis for large $L$ implies that $\N (\beta)$ approaches a step function with zeros at $\beta=0$ and $\beta=180$ degrees, and that asymptotically $\beta_{\rm opt} \rightarrow 90$ degrees.}
\label{optimal}
\end{figure}
\begin{figure}[t]
\centering
\includegraphics[width=7cm]{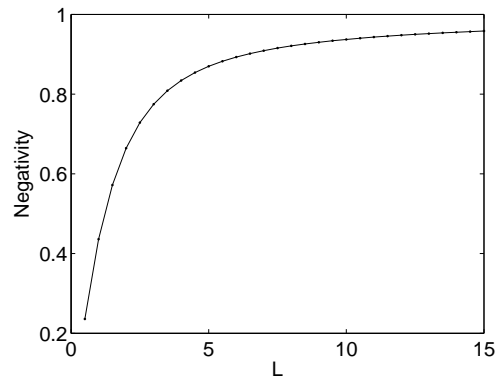}
\caption{\textbf{Negativity in the classical limit:} Preservation of negativity increases rapidly with increasing spin $L$. States $\sigma^{k_c}$ from sectors with angular momentum close to the classical value dominate the negativity.}
\label{negativity}
\end{figure}

\section{Conclusions}\label{conclusions}

In this work we have considered the fundamental limits of finite-sized quantum reference frames. We have analysed how well the most primitive such frame can ever perform under certain information preserving criteria, and have found a rough concordance as to the properties of the optimal primitive reference frame. The optimal frames involve spins \textit{roughly} orthogonal (between 82 and 90 degrees), and with a small degree of entanglement (an entropy of entanglement of about $0.15$). The actual performance of a single optimal reference frame in each case is roughly $25 \%$ of the infinite-limit classical frame. We also studied the classical limit of such a quantum reference frame and found that the usual vector addition law emerges gradually, and that a large amount of negativity can be preserved even for modest-sized spins. The optimal angle of inclination as a function of $L$ displays an unusual peak at $L=3$, before slowly decreasing towards $90$ degrees, where finite-size effects are washed out and the reference frame no longeer degrades the spin entanglement.

\begin{acknowledgments} DJ acknowledges the support of EPSRC, and thanks Terry Rudolph and Yeong-Cherng Liang for helpful discussions.
\end{acknowledgments}

\begin{appendix}

\appendix

\section{A lack of a classical reference frame}\label{Alack}

Here we describe how G-twirling arises when there is not a shared directional reference frame. We can imagine an agent Eve, preparing a system of spin particles that make up a quantum system in a state $|\Psi \>$. This pure quantum system is prepared in conjunction with a `classical reference frame' $CRF_E$ which sharply specifies directions in space. We could regard $CRF_E$ as a large spin-$j$ system in a highly coherent state, however here we simply take it as an abstract background setting and place it on the classical side of the Heisenberg cut.

Eve then sends the system to Alice, but unfortunately Alice does not share a private reference frame with Eve. Her local axes are related to Eve's by some unknown rotation and so Alice must average the state uniformly over all spatial rotations. The processed state represents the updated knowledge of the randomly rotated state, or the state from Alice's point of view, where she lacks knowledge of how her $CRF_A$ is related to Eve's $CRF_E$. In the case of multiple copies, instead of a single-shot procedure, Eve sends multiple copies of the same state to Alice, but if the systems are sent identically then the same unitary rotation is applied to each system then we have a perfect channel, where no averaging takes place. However if the systems are independently and randomly rotated around, then Alice must once again use an average over all rotations. These two perspectives correspond to the bayesian and frequentist views of quantum states.

In both cases the averaged state becomes
\begin{eqnarray}
\G[ | \Psi \> \< \Psi | ] \equiv \int d \Omega U(\Omega) | \Psi \> \< \Psi | U^\dagger (\Omega)
\end{eqnarray}
where we use the Haar measure over the set of unitaries $U(\Omega)=U_1(\Omega) \otimes \cdots U_N (\Omega)$ induced by the spatial rotation $R(\Omega)$ and integrate over all rotations in $SO(3)$. Each subsystem $\H_k$ in $\H = \H_1 \otimes \cdots \H_N$ transforms under the rotation via an irrep $U_k(\Omega)$ of $SU(2)$ -  $N$ spin-1/2 particles in total. We refer to the application $\rho \rightarrow \G[\rho]$ of this group averaging as `G-twirling'.

\section{Protected Subsystems}\label{Aprotected}

The tensor product representation $U(\Omega)$ is reducible, and the full Hilbert space $\H = \H_1 \otimes \cdots \otimes \H_N$ can be split into irreps of the rotational symmetry group. Specifically, the Schur-Weyl duality \cite{weyl} tells us that the full Hilbert space $\H$ splits into a sum of multiplicity-free irreps of the group $SU(2)\times \S_N$, where $\S_N$ is the discrete permutation group of $N$ elements. In other words, we have that $\H =\oplus_q ( \M_q \otimes \N_q)$ where $\M_q\otimes \N_q \equiv \H_q$ is the \emph{subspace sector} in which $SU(2)$ acts trivially on $\N_q$ and irreducibly on $\M_q$, while $\S_N$ acts trivially on $\M_q$ and irreducibly on $\N_q$.

We shall assume that Eve shares an ordering reference frame with both Alice and Bob, or at the least, she maintains the same ordering of systems when sending multiple copies of the same state. 

Since the subsystems undergo a unitary channel, $\N_q$ are called \textit{protected or decoherence free} virtual subsystems and can hold information that is not erased by rotations.

It is found that 2 spins can protect 1 classical bit, while we need a minimum of 3 spins to encode a single virtual spin subsystem. This follows from the angular momentum addition $ \H=\H_{1/2}\otimes\H_{1/2}\otimes\H_{1/2} =  \M_{3/2}\otimes \N_{3/2} \oplus \M_{1/2} \otimes \N_{1/2}$ with dim$(\M_{3/2})=$4, dim$(\N_{3/2}) =1$, and dim$(\M_{1/2})=$ dim$(\N_{1/2}) =2$.

This decomposition into irreps of $SU(2) \times S_3$ can be described in an orthonormal basis for $\H$ of the form $\{|\lambda, s,p\> \}$ where $\lambda = 3/2, 1/2$ is a symmetry class label and corresponds to the different total angular momentum sectors, while the remaining labels $s$ and $p$ correspond to the action of the rotation group and permutation group respectively. In terms of computational spin bases, we have for the $J=3/2$ sector
\begin{eqnarray}
|\frac{3}{2}, 0,0\> &=& |000\> \non 
|\frac{3}{2}, 1,0\> &=& \frac{1}{\sqrt{3}}(|001\>+|010\>+|100\>) \non 
|\frac{3}{2}, 2,0\> &=& \frac{1}{\sqrt{3}}(|110\>+|101\>+|011\>) \non 
|\frac{3}{2}, 3,0\> &=& |111\> 
\end{eqnarray}
while for the $J=1/2$ sector
\begin{eqnarray}
|\frac{1}{2}, 0,0\> &=&  \frac{1}{\sqrt{2}}(|010\>-|100\>)\non 
|\frac{1}{2}, 0,1\> &=& \frac{1}{\sqrt{6}}(2|001\> - |010\> - |100\>) \non 
|\frac{1}{2}, 1,0\> &=&   \frac{1}{\sqrt{2}}(|011\>-|101\>)\non 
|\frac{1}{2}, 1,1\> &=& \frac{1}{\sqrt{6}}(-2|110\> + |101\> + |011\>) 
\end{eqnarray}

\section{Accessing the protected qubit}\label{virtobs}
We can ask which spin observables must Alice manipulate in order to access the virtual qubit subsystem, protected from the G-twirling. 

Alice has in her possession three spin-1/2 particles, with angular momentum operators $\{ \mathbf{J_1},\mathbf{J_2},\mathbf{J_3} \}$, where for a fixed Cartesian frame with Pauli matrices $\{\sigma^x_a,\sigma^y_a,\sigma^z_a \}$ on spin $a$ we will use the compact vector notation $\mathbf{J}_a:=(J^x_a, J^y_a,J^z_a)=(\frac{1}{2} \sigma^x_a, \frac{1}{2} \sigma^y_a,\frac{1}{2}\sigma^z_a)$.

She knows that for a given state $\rho$ the relevant G-twirling map takes the form
\begin{eqnarray}
\G [ \rho] &=& \D_1 [ \Pi_1 \rho \Pi_1 ] + \D_2 [ \Pi_2 \rho \Pi_2 ]
\end{eqnarray}
where the j=3/2 and j=1/2 sectors are written $\H_1$ and $\H_2=\M_2\otimes \N_2$, with projectors $\Pi_1$ and $\Pi_2$, and where $\D_1$ fully decoheres $\H_1$ while $\D_2$ fully decoheres the virtual subsystem $\M_2$ and leaves $\N_2$ unaltered. 
The observable $S$ that has the sectors $\H_{1,2}$ as eigenspaces is given, in terms of local spin observables, by
\begin{eqnarray}
S&=& \frac{1}{3} (\mathbf{J}_1\cdot\mathbf{J}_2 + \mathbf{J}_2\cdot\mathbf{J}_3 +\mathbf{J}_1\cdot\mathbf{J}_3 )
\end{eqnarray}
where the inner products in $S$ are defined by
\begin{eqnarray}
\mathbf{J}_a\cdot \mathbf{J}_b := (J_a^x \otimes J_b^x+J_a^y \otimes J_b^y+J_a^z \otimes J_b^z)\otimes \I_c.
\end{eqnarray}
We see that $S$ is a relational observable, roughly being the average degree of alignment between the three spins, and is both rotationally invariant and permutationally invariant, as expected from the Schur-Weyl duality.

The physical observables that Alice must measure to access the state on the protected virtual qubit system $\N_2$ are given in terms of the physical spin observables by
\begin{eqnarray}
N_x&=&\frac{1}{\sqrt{3}}( \mathbf{J}_2 -\mathbf{J}_1)\cdot\mathbf{J}_3 \non 
N_y&=& \frac{2}{\sqrt{3}} (\mathbf{J}_1 \times \mathbf{J}_2) \cdot \mathbf{J}_3\non 
N_z&=&\frac{1}{3}( \mathbf{J}_2\cdot\mathbf{J}_3 +\mathbf{J}_1\cdot\mathbf{J}_3 -2\mathbf{J}_1\cdot\mathbf{J}_2). 
\end{eqnarray}
The observables $N_i$ are rotationally invariant, but not invariant under permutations, as expected. 

Using the ``order 2 with order 2'' operator identity
\begin{eqnarray}\label{twotwo}
[\J{a} \cdot \J{c} ,\J{b} \cdot \J{c} ] = i (\J{a} \times \J{b} ) \cdot \J{c}
\end{eqnarray}
together with the ``order 2 with order 3'' identity
\begin{eqnarray}\label{fifthorder}
[\J{b} \cdot \J{c}, (\J{a} \times \J{b}) \cdot \J{c}] &=& \frac{i}{2} \J{a} \cdot (\J{c}- \J{b}) 
\end{eqnarray}
one can readily verify that $\{ N_i \}$ obey the $su(2)$ Lie algebra relations $[N_i, N_j] = i\epsilon_{ijk} N_k$ and also satisfy $N_i^2 =\I_{\H_2}=$ projector onto the $j=1/2$ sector, justifying our labels of $x,y,z$ for the virtual spin observables. 

Furthermore, changes of the order reference of the particles corresponds to the action of the permutation group $\S_3$, which preserves the commutation relations and corresponds to rotations of the virtual Bloch sphere through angles of 120 degrees.

\section{Twirling Alice, Twirling Bob}\label{Atwirling}

In section \ref{entanglement} we are interested in how well local reference frames do in the preservation of bipartite entanglement. The initial product state is assumed to take the form $|\Psi \> = |QRF_A\> \otimes |\varphi \> \otimes |QRF_B \>$, where $|\varphi \>$ is an entangled two spin state and $|QRF_A\>$ and $|QRF_B\>$ consist of $N_A$ and $N_B$ spins respectively. In the absence of a shared classical reference frame for both $A$ and $B$ this means that both sides are G-twirled independently, $|\Psi \> \< \Psi | \rightarrow \rho_{AB}=\G_A \otimes \G_B [|\Psi \> \< \Psi |]$, or more explicitly
\begin{eqnarray}
&&\hspace{-0.4cm} \G_A \otimes \G_B [|\Psi \> \< \Psi |] = \int \hspace{-0.25cm}\int\hspace{-0.1cm} d \Omega d \Omega' U_{AB}(\Omega,\Omega') |\Psi \> \< \Psi |U^\dagger_{AB}(\Omega,\Omega')\nonumber
\end{eqnarray}
where $U_{AB}(\Omega,\Omega') =U(\Omega)^{\otimes N_A+1} \otimes U(\Omega')^{\otimes N_B+1}$ is the unitary corresponding to the rigid rotation of all spins at $A$ through an angle $\Omega$, and the rigid rotation of all spins at $B$ through an angle $\Omega'$.

A brute-force numerical simulation of this G-twirling rapidly gets difficult, and so one must exploit the structure of the decoherence-full/free subsystems to determine how much entanglement is lost for a given pair of local reference frames $QRF_A$ and $QRF_B$.

\end{appendix}

\end{document}